\documentclass[letterpaper, 10 pt, conference]{ieeeconf}

\IEEEoverridecommandlockouts                              % This command is only needed if 
\usepackage{cite}
\usepackage{amsmath,amssymb,amsfonts}
\usepackage{graphicx}
\usepackage{textcomp}

\usepackage{graphicx}
\usepackage{subfigure}
\usepackage{epsfig} % for postscript graphics files
\usepackage{cite}

\usepackage{amsmath} 
\usepackage{amssymb}
\usepackage{hyperref}
\usepackage{mathtools}
\usepackage{optidef}
\usepackage{subcaption}
\usepackage{booktabs}
\usepackage{siunitx}
\usepackage{acro}
\usepackage{placeins}

\DeclareAcronym{bla}{
  short = BLA,
  long  = \emph{best linear approximation}
}
\DeclareAcronym{nrmse}{
  short = NRMSE,
  long  = normalized root mean square error
}
\DeclareAcronym{mse}{
  short = MSE,
  long  = mean square error
}
\DeclareAcronym{rmse}{
  short = RMSE,
  long  = root mean square error
}
\DeclareAcronym{nllfr}{
  short = NL-LFR,
  long  = \emph{nonlinear linear fractional representation}
}
\DeclareAcronym{pnlss}{
  short = PNLSS,
  long  = polynomial nonlinear state-space
}
\DeclareAcronym{rms}{
  short = RMS,
  long  = root mean square
}
\DeclareAcronym{lti}{
  short = LTI,
  long  = linear time-invariant
}
\DeclareAcronym{grs}{
    short = GRS,
    long  = guided residual search
}
\DeclareAcronym{mad}{
    short = MAD,
    long  = mean absolute deviation
}

\let\oldproof\proof
\let\endoldproof\endproof
\let\proof\relax
\let\endproof\relax
\usepackage{amsthm}
\let\proof\oldproof
\let\endproof\endoldproof

\theoremstyle{definition}
\newtheorem{assumption}{Assumption}

\newtheorem{remark}{Remark}
\newtheorem{theorem}{Theorem}

\newlength{\Slabelwidth}
\newcounter{Sctr}
\renewcommand{\theSctr}{\textbf{S\arabic{Sctr}}}

\newenvironment{Slist}{%
  \begin{list}{\textbf{\theSctr}}{%
      \usecounter{Sctr}%
      \setlength{\labelwidth}{\Slabelwidth}%
      \setlength{\labelsep}{0.6em}%
      \setlength{\leftmargin}{\dimexpr\labelwidth+\labelsep\relax}%
      \setlength{\itemindent}{0pt}%
      \setlength{\listparindent}{0pt}%
      \setlength{\parsep}{0pt}%
      \setlength{\topsep}{0.2em}%
      \setlength{\itemsep}{0.2em}%
  }%
}{%
  \end{list}%
}

\usepackage{fancyhdr}

\begin{document}

\title{\LARGE \bf
A guided residual search for nonlinear state-space identification
}

\author{Merijn Floren$^{1,2}$ and Jan Swevers$^{1,2}$
\thanks{
This work is supported by Flanders Make's IRVA projects ASSISStaNT and CoMoDO. $^{1}$Department of Mechanical Engineering, KU Leuven, Belgium. $^{2}$Flanders Make@KU Leuven, Belgium. Corresponding author email: \texttt{merijn.floren@kuleuven.be}.
}
}

\maketitle
\thispagestyle{empty}

%%%%%%%%%%%%%%%%%%%%%%%%%%%%%%%%%%%%%%%%%%%%%%%%%%%%%%%%%%%%%%%%%%%%%%%%%%%%%%%%
\begin{abstract}
    Identifying the parameters of nonlinear state-space models from input-output data typically requires solving a highly non-convex optimization problem, which is prone to slow convergence and suboptimal local solutions.
    This work improves the reliability and efficiency of the estimation process by decomposing the overall optimization problem into a sequence of tractable subproblems.
    Starting from a linear baseline model, nonlinear residual dynamics are first estimated using a \ac{grs} and subsequently refined through multiple-shooting optimization.
    Experiments on two benchmarks show competitive performance with state-of-the-art black-box methods and improved convergence over naive initialization.
\end{abstract}

\begin{keywords}
Nonlinear systems identification, identification for control, optimization
\end{keywords}

\section{Introduction}
\label{sec:introduction}

State-space models offer a flexible framework for nonlinear system identification, as they closely reflect the dynamical structure of physical systems while allowing structural constraints to be imposed.
A particularly structured formulation is the \ac{nllfr}, which separates dominant linear dynamics from localized nonlinearities through a feedback interconnection between a \ac{lti} system and a static nonlinearity, as illustrated in Fig.~\ref{fig:model_structure}. This formulation captures a wide range of nonlinear phenomena while retaining exploitable structure for efficient parameter estimation.

The focus of this work is on estimating \ac{nllfr} \emph{simulation} models from experimental input-output data, a task that typically requires solving a computationally demanding, highly nonlinear, and non-convex parameter optimization problem, which is sensitive to initialization and prone to poor local minima.
These recursion-induced challenges intensify for longer horizons and more complex models, underscoring the need for robust and efficient optimization strategies.

One way to alleviate these long-horizon difficulties is to optimize over shorter simulation intervals. Doing so smooths the loss landscape~\cite{ribeiro2020smoothness} and allows for inherent parallelization, but requires reliable state estimates at the beginning of each interval.
The SUBNET approach~\cite{beintema2021nonlinear}, for example, infers these initial states using a deep subspace encoder trained jointly with the nonlinear model, 
whereas multiple-shooting formulations~\cite{ribeiro2020smoothness,retzler2022shooting} treat the states as decision variables, thereby introducing an additional initialization problem.

\begin{figure}[t]
    \centering
    \includegraphics[scale=1]{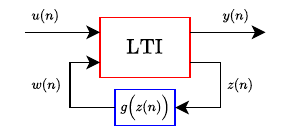}
    \caption{Schematic overview of the \acs{nllfr} structure.}
    \label{fig:model_structure}
    % \vspace*{-18pt}
\end{figure}

State estimation can also be addressed by first identifying a linear surrogate model~\cite{paduart2010identification,schoukens2020initialization,marconato2013improved,floren2025inference,floren2026latent,floren2022nonlinear}. In~\cite{marconato2013improved}, this linear approximation serves to estimate the nonlinear state trajectory through a least-squares trade-off between model consistency and data fit.
The inferred samples are then used to parametrize the nonlinear residual without recursion, yielding a fully initialized model for subsequent simulation-based refinement.
Guided residual search (\ac{grs}) methods~\cite{floren2025inference,floren2026latent,floren2022nonlinear,floren2024identification} follow the same sequential rationale, but exploit the \ac{nllfr} structure more explicitly by treating \(w\) in Fig.~\ref{fig:model_structure} as an auxiliary input to an extended linear model. 
However, both~\cite{marconato2013improved} and the time-domain \ac{grs} approaches~\cite{floren2026latent,floren2022nonlinear,floren2024identification} can induce a \emph{distribution shift} between the residual-parametrization stage and the simulation-based optimization stage, potentially hampering convergence of the latter.
So far, this shift has not been analyzed, demonstrated, or explicitly addressed.

This work characterizes the distribution shift both theoretically and experimentally, and proposes a mitigation strategy based on \ac{grs}-initialized multiple shooting, thereby avoiding the state-initialization burden of~\cite{ribeiro2020smoothness,retzler2022shooting}. In addition, we generalize the \ac{grs} rationale beyond existing formulations: unlike~\cite{floren2026latent,floren2022nonlinear,floren2024identification}, the proposed method requires no prior knowledge of the governing physical equations, and unlike~\cite{floren2025inference,floren2026latent,floren2022nonlinear}, it does not rely on periodic data.

The remainder of this paper is structured as follows. Section~\ref{sec:problem_statement} establishes the problem context, while Section~\ref{sec:method} presents the proposed identification approach. Section~\ref{sec:results} demonstrates the method's effectiveness on two experimental benchmark datasets, and Section~\ref{sec:conclusions} concludes the paper.

\textit{Notation:}
The sets of real, integer, and natural numbers are denoted by \( \mathbb{R} \), \( \mathbb{Z} \), and \( \mathbb{N} \), respectively.
For \( x \in \mathbb{R}^n \) and symmetric positive definite \( Q \in \mathbb{R}^{n \times n} \), define \(\|x\|_Q^2 = x^\top Q x\), and \(\|x\|^2 = x^\top x\), where \((\cdot)^\top\) denotes the transpose.
The identity matrix is denoted by \( I \), and \(0\) denotes the zero matrix, with dimensions clear from context.

% ----------------------------------------------------------------------------------
\section{Problem Statement}\label{sec:problem_statement}
The \ac{nllfr} structure in Fig.~\ref{fig:model_structure} is parametrized as:
\begin{subequations}
    \begin{align}
      x(n+1) &= A x(n) + B_u u(n) + B_w w(n),\label{eq:mod_x4}\\
      y_0(n) &= C_y x(n) + D_{yu} u(n) + D_{yw} w(n),\label{eq:mod_y4}\\
      z(n)   &= C_z x(n) + D_{zu} u(n),\label{eq:mod_z4}\\
      w(n)   &= f_{\text{NN}}\big(z(n)\big),\label{eq:mod_w4}
    \end{align}
    \label{eq:model4}%
\end{subequations}
where \(x(n) \in \mathbb{R}^{n_x}\) denotes the latent state at time index \(n \in \mathbb{Z}\), \(u(n)\in \mathbb{R}^{n_u}\) and \(y_0(n) \in \mathbb{R}^{n_y}\) denote the exact input and output, respectively, and \(z(n) \in \mathbb{R}^{n_z}\) and \(w(n) \in \mathbb{R}^{n_w}\) denote the respective input and output of the nonlinear function \(f_{\text{NN}}: \mathbb{R}^{n_z} \to \mathbb{R}^{n_w}\). The matrices \(A \in \mathbb{R}^{n_x \times n_x}\), \(B_u \in \mathbb{R}^{n_x \times n_u}\), \(B_w \in \mathbb{R}^{n_x \times n_w}\), \(C_y \in \mathbb{R}^{n_y \times n_x}\), \(C_z \in \mathbb{R}^{n_z \times n_x}\), \(D_{yu} \in \mathbb{R}^{n_y \times n_u}\), \(D_{yw} \in \mathbb{R}^{n_y \times n_w}\), and \(D_{zu} \in \mathbb{R}^{n_z \times n_u}\) define the \ac{lti} dynamics. The static nonlinearity \(f_{\text{NN}}\) is parametrized as a feedforward neural network with \(n_l\) hidden layers:
\begin{equation}
\begin{aligned}
z^{[l]}(n) &= \sigma\big(W^{[l]} z^{[l-1]}(n) + b^{[l]}\big),\quad l=1,\,\dots,\,n_l, \\
w(n) &= W^{[n_l+1]} z^{[n_l]}(n) + b^{[n_l+1]},
\end{aligned}\vspace*{-4pt}
\end{equation}
where \(\sigma(\cdot)\) denotes an elementwise nonlinear activation function, and \(z^{[0]}(n) = z(n)\).
The weight matrices \(W^{[l]}\) and bias vectors \(b^{[l]}\), of appropriate dimensions, are collectively denoted by the parameter set \(\theta_{\text{NN}} = \{W^{[l]},\, b^{[l]}\}_{l=1}^{n_l+1}\).

For parameter estimation, we consider the dataset:
\begin{equation}
\mathcal{D} = \big\{\big(u(n),\,y(n)\big)\big\}_{n=0}^{N-1},
\end{equation}
where both \(u(n)\) and \(y(n)\) have zero mean and unit variance.\vspace*{-4pt}
\begin{assumption}
The measured output is corrupted by additive, zero-mean, stationary noise \(v(n)\) with finite variance, i.e., \(y(n) \coloneqq y_0(n) + v(n)\). The noise may be colored and is uncorrelated with the input \(u(n)\).
\vspace*{-4pt}
\end{assumption}
\begin{assumption}
The input signal \(u(n)\) is exactly known and is persistently exciting of a sufficiently high order.
\vspace*{-4pt}
\end{assumption}
\begin{assumption}\label{ass:init_model}
An initial model
\(\mathcal{M}_0 = (\tilde{A}, \tilde{B}_u, \tilde{C}_y, \tilde{D}_{yu})\) 
is available. The model is stable and captures the dominant linear input-output behavior of the system.
\vspace*{-4pt}
\end{assumption}
The construction of \(\mathcal{M}_0\) is omitted for brevity; standard linear identification procedures can be used, e.g.,~\cite{schoukens2020initialization,floren2025inference}.

We propose a sequential identification strategy consisting of three stages: (i) performing a \ac{grs} to infer the latent variables \(w\) and \(x\) while simultaneously estimating \(B_w\) and \(D_{yw}\); (ii) {training the nonlinear parameters \(\theta_{\text{nl}}=(C_z,\,D_{zu},\,\theta_{\text{NN}})\) in a recursion-free manner} using the inferred latent variables; and (iii) jointly refining all model parameters through nonlinear optimization. Steps~(i) and~(ii) aim to accelerate convergence of step~(iii) by providing a well-informed initialization, but also introduce their own challenges, discussed next.

% \vspace*{-10pt}
\subsection{Mitigating distribution shift using multiple shooting}\label{subsec:distribution_shift}
Let \(\big\{\big(w_*(n),\,x_*(n),\,u(n)\big)\big\}_{n=0}^{N-1}\) denote the stage~(ii) training data, with \(w_*(n)\) and \(x_*(n)\) inferred by the \ac{grs} in stage~(i).
The estimate \(\hat{\theta}_{\text{nl}}\) is obtained by learning the mapping between~\eqref{eq:mod_z4} and~\eqref{eq:mod_w4} from these data in a recursion-free fashion.
Combining \(\hat{\theta}_{\text{nl}}\) with the baseline model \(\mathcal{M}_0\) and the stage~(i) estimates of \(B_w\) and \(D_{yw}\) yields the fully parametrized state-update mapping \(\hat{f}: \mathbb{R}^{n_x} \times \mathbb{R}^{n_u} \to \mathbb{R}^{n_x}\) corresponding to~\eqref{eq:mod_x4}, \eqref{eq:mod_z4}, and \eqref{eq:mod_w4}.
Now, when this mapping is deployed recursively as \(\hat{x}(n+1)=\hat{f}\big(\hat{x}(n),u(n)\big)\), small one-step discrepancies may accumulate over time, causing the simulated states \(\hat{x}(n)\) to drift away from the training states \(x_*(n)\), a phenomenon referred to as \emph{distribution shift}.  Building on the analysis in~\cite{venkatraman2015improving}, we formalize this effect as follows.\vspace*{-4pt}

\begin{theorem}
    Assume that (i) \(\hat{f}\) is uniformly Lipschitz continuous in \(x\) with constant \(L>1\), i.e., \(\|\hat{f}(x_1,\,u)-\hat{f}(x_2,\,u)\|\leq L\|x_1-x_2\|\) for all \(x_1,x_2\in\mathbb{R}^{n_x}\) and \(u\in\mathbb{R}^{n_u}\), and (ii) the one-step discrepancy satisfies \(\|\hat{f}\big(x_*(n),\,u(n)\big) - x_*(n+1)\|\leq \varepsilon\) for all \(n\in\{0,\dots,N-2\}\).
    Then, by recursively propagating the simulated states through \(\hat{f}\), with \(\hat{x}(0)=x_*(0)\), the deviation from the inferred trajectory at the final time sample satisfies
    \(
    \big\|\hat{x}(N-1)- x_*(N-1)\big\| \in \mathcal{O}(L^{N-1} \varepsilon).
    \)
    \label{th:distribution_shift}
\end{theorem}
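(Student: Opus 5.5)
The plan is a standard discrete Grönwall-type argument on the error sequence \(e(n) = \hat{x}(n) - x_*(n)\), for \(n=0,\dots,N-1\). By the choice \(\hat{x}(0)=x_*(0)\) we have \(e(0)=0\). The aim is a one-step recursive inequality for \(\|e(n)\|\), which we then unroll.

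For the one-step inequality, add and subtract \(\hat{f}\big(x_*(n),u(n)\big)\):
\begin{equation*}
e(n+1) = \Big[\hat{f}\big(\hat{x}(n),u(n)\big)-\hat{f}\big(x_*(n),u(n)\big)\Big] + \Big[\hat{f}\big(x_*(n),u(n)\big)-x_*(n+1)\Big].
\end{equation*}
Apply the triangle inequality. The first bracket is bounded by \(L\|e(n)\|\) using the uniform Lipschitz assumption (i); uniformity in \(u\) matters here, since the same \(u(n)\) appears in both arguments. The second bracket is bounded by \(\varepsilon\) using assumption (ii), valid for \(n\in\{0,\dots,N-2\}\). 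Together these give \(\|e(n+1)\|\le L\|e(n)\|+\varepsilon\).

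Next, unroll the recursion by induction on \(n\), starting from \(\|e(0)\|=0\). The induction claim is
\begin{equation*}
\|e(n)\| \le \varepsilon\sum_{k=0}^{n-1}L^k = \varepsilon\,\frac{L^{n}-1}{L-1}.
\end{equation*}
The geometric-sum closed form is valid because \(L>1\), so \(L\neq 1\). Evaluating at \(n=N-1\) gives
\begin{equation*}
\|\hat{x}(N-1)-x_*(N-1)\|\le \frac{L^{N-1}-1}{L-1}\,\varepsilon \le \frac{1}{L-1}\,L^{N-1}\varepsilon.
\end{equation*}
The factor \(1/(L-1)\) is a constant independent of \(N\) and \(\varepsilon\), so the deviation lies in \(\mathcal{O}(L^{N-1}\varepsilon)\).

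There is no real obstacle. The only point needing care is interpreting the \(\mathcal{O}\)-statement: the constant \(1/(L-1)\) must be independent of \(N\) and \(\varepsilon\), which holds for fixed \(L>1\). This is exactly why the hypothesis \(L>1\) is imposed. For \(L\le 1\) the bound would instead be linear in \(N\) (or bounded), so it could not be written in this exponential form.
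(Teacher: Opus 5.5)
Your proposal is correct and follows the paper's proof essentially step for step. It uses the same error sequence $e(n)$, the same add-and-subtract of $\hat{f}\big(x_*(n),u(n)\big)$ yielding $\|e(n+1)\|\le L\|e(n)\|+\varepsilon$, and the same geometric-sum unrolling from $e(0)=0$, with your explicit constant $1/(L-1)$ making the paper's final $\mathcal{O}$-step slightly more precise.
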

\vspace*{-14pt}
\begin{proof}
    Define the deviation as \({e(n)} \coloneqq \hat{x}(n)-x_*(n)\). Then
    \(
      {e(n+1)}= \hat{f}\big(\hat{x}(n), \,u(n)\big) - x_*(n+1).
    \)
    Inserting \(\hat{f}\big(x_*(n),\,u(n)\big)\) twice and applying the triangle inequality yields
    \(
      \|{e(n+1)}\| \leq \|\hat{f}\big(\hat{x}(n), \,u(n)\big) - \hat{f}\big(x_*(n),\,u(n)\big)\| 
                    + \|\hat{f}\big(x_*(n),\,u(n)\big) - x_*(n+1)\|.
    \)
    By Lipschitz continuity and the one-step discrepancy bound, we obtain
    \(\|{e(n+1)}\|\leq L \|{e(n)}\| + \varepsilon\).
    With \({e(0)}=0\), iterating this recursion gives
    \(\|{e(N-1)}\| \leq \varepsilon \sum_{n=0}^{N-2} L^n\).
    For \(L>1\), the geometric sum equals \(\frac{L^{N-1}-1}{L-1}\), implying \(\|{e(N-1)}\|\in\mathcal{O}(L^{N-1}\varepsilon)\).
\end{proof}
{In other words, small one-step discrepancies can, in the worst case, be amplified geometrically during recursive simulation.
The exponential dependence on the horizon is worst-case tight under the stated assumptions\footnote{We assume \(L>1\) since \(\hat{f}\) is partly implemented using a neural network, which typically has a Lipschitz constant greater than one~\cite{gouk2021regularisation,khromov2023some}.}. Indeed, the scalar construction \(x_*(n+1)=Lx_*(n)\), with \(L >1\) and \(\hat f(x,u)=Lx+\varepsilon\), satisfies the assumptions of Theorem~\ref{th:distribution_shift} and yields \(e(n+1)=Le(n)+\varepsilon\), with \(e(0)=0\). Hence, \(e(N-1)=\varepsilon\sum_{n=0}^{N-2}L^n\in\Theta(L^{N-1}\varepsilon)\), which shows that the geometric accumulation can be attained exactly.}

{
Theorem~\ref{th:distribution_shift} highlights how the recursion-free training in stage~(ii) can impair subsequent simulation-based optimization by placing the optimizer far from a desirable local minimum, thereby offsetting the benefit of the \ac{grs}. 
At the same time, the error bound also suggests that this effect can be controlled via the horizon length.
Therefore, instead of optimizing the full trajectory from a single initial condition, we partition it into shorter non-overlapping intervals, so that the bound applies locally on each interval. The key requirement for this construction is the availability of an initial state estimate at the beginning of each interval, which is precisely provided by the inferred states \(x_*(n)\) from the \ac{grs} in stage~(i).}

{In the limiting case of unit-length intervals, local error accumulation is completely eliminated.
Without additional constraints, however, this would amount to one-step prediction-error minimization, which suffers from the same deployment issue described in Theorem~\ref{th:distribution_shift}.
To retain a simulation-error objective, the intermediate states are introduced as decision variables, and equality constraints are imposed to enforce continuity between consecutive intervals\footnote{This effectively eliminates the discrepancy term \(\varepsilon\) up to solver tolerances.}.
Doing so yields a smoother optimization problem with improved gradient information~\cite{ribeiro2020smoothness}, while allowing computations to be parallelized across intervals.
This formulation is known as \emph{multiple shooting} and is adopted in stage~(iii).}

% ----------------------------------------------------------------------------------
\section{Methodology}\label{sec:method}
\subsection{Bilevel guided residual search}\label{sec:bilevel_grs}
In this first stage, the model \(\mathcal{M}_0\) is held fixed, and the dual objective is to parametrize \(\theta_{wy} = (B_w,\,D_{yw})\) and infer the latent variables \(w\) and \(x\), used to train the neural network in Section~\ref{sec:neuralnet}. 
To this end, we propose an optimization scheme consisting of two levels: (i) an inner problem that estimates \(w\) and \(x\) conditioned on \(\theta_{wy}\), and (ii) an outer problem that updates \(\theta_{wy}\) based on the inferred signals. The scheme alternates between these levels until convergence.

\subsubsection{Inner problem}
Consider the fully parametrized linear submodel \(\mathcal{M}_1\) defined by~\eqref{eq:mod_x4} and \eqref{eq:mod_y4}, in which \(w\) is treated as an exogenous input.
The fixed matrices \(A\), \(B_u\), \(C_y\), and \(D_{yu}\) of \(\mathcal{M}_1\), in that order, are obtained from \(\mathcal{M}_0\) as
\begin{equation}
    \theta_{uy} = \big(T_x^{-1} \tilde{A} T_x,\, T_x^{-1} \tilde{B}_u,\, \tilde{C}_y T_x,\, \tilde{D}_{yu}\big),
    \label{eq:normalize}
\end{equation}
where \(T_x\) is diagonal and constructed from the empirical standard deviations of the simulated state trajectories of \(\mathcal{M}_0\).
The similarity transformation in~\eqref{eq:normalize} normalizes the state variables to have approximately unit variance, which is beneficial for all subsequent optimization stages.

The objective of this inner problem is to obtain a nonparametric estimate of 
\(w\) that minimizes the discrepancy between the simulated outputs of \(\mathcal{M}_1\) and the measured outputs in \(\mathcal{D}\).
For this purpose, we employ a sliding-window strategy, in which \(w\) is estimated by solving a local optimization problem over \(H+1\) samples, where \(H \ll N\) denotes the prediction horizon.
Only the solution corresponding to the first time instance is retained and used to shift the window forward. This procedure also yields an estimate of the latent state \(x\).\vspace*{-4pt}

\begin{assumption}\label{ass:outputcontrollability}
The submodel \(\mathcal{M}_1\) is output controllable with respect to its exogenous input \(w\).
\vspace*{-4pt}
\end{assumption}
\begin{remark}\label{rem:window_length}
{The windowed formulation is structurally similar to \emph{moving horizon estimation} and reduces sensitivity to (colored) noise in \(\mathcal{D}\) by estimating \(w\) from multiple samples rather than instantaneous or one-step-ahead relations.}
\vspace*{-4pt}
\end{remark}

In the following, the explicit dependence on \(\theta_{wy}\) is omitted for brevity.
For \(s \in \{u,w,y\}\), we first define the vectors
\begin{equation}
\mathcal{S}_s(n)
=
\begin{bmatrix}
s(n)^\top &
s(n+1)^\top &
\cdots &
s(n+H)^\top
\end{bmatrix}^\top,
\end{equation}
of length \(n_s (H+1)\).
Then, for \(n=0,\ldots,N-H-1\), we solve:
\begin{mini!}|s|[2]
    {\mathcal{S}_w(n)}{
        \frac{1}{2}\big\| \mathcal{S}_y(n) - \hat{\mathcal{Y}}(n) \big\|^2 
        + \frac{\lambda}{2} \big\| \mathcal{S}_w(n) \big\|_{Q_w}^2,\label{eq:stacked_cost}}
    {\label{eq:opti_problem}}{}
    \addConstraint{{\hat{\mathcal{Y}}}(n)}{=\mathcal{O}_x x_*(n) + \mathcal{T}_u \mathcal{S}_u(n) + \mathcal{T}_w \mathcal{S}_w(n),\label{eq:stacked_output}}
\end{mini!}
where \(\lambda \in \mathbb{R}_{>0}\) regularizes the solution through
\begin{equation}
Q_w
=
\begin{bmatrix}
B_w \\
D_{yw}
\end{bmatrix}^{\top}
\begin{bmatrix}
B_w \\
D_{yw}
\end{bmatrix}
+
\frac{\epsilon}{\lambda} I,
\end{equation}
with \(\epsilon \in \mathbb{R}_{>0}\) a small constant ensuring strict positive definiteness. 
This specific choice of \(Q_w\) penalizes the influence of \(\mathcal{S}_w(n)\) in terms of the effect on the system rather than its raw magnitude, thus making the regularization scale-invariant with respect to \(B_w\) and \(D_{yw}\), so that \(\lambda\) retains its intended meaning regardless of how these matrices are parametrized.

The stacked output predictions in~\eqref{eq:stacked_output} are expressed using the extended observability matrix
\begin{subequations}
    \begin{equation}
        \mathcal{O}_x=
            \begin{bmatrix}
            (C_y)^\top & (C_y A)^\top & \cdots & (C_y A^H)^\top
            \end{bmatrix}^{\top},
    \end{equation}
    of size \(n_y (H+1) \times n_x\), and the block Toeplitz matrices
    \begin{equation}
        \mathcal{T}_v=
            \begin{bmatrix}
                D_{yv} & 0 & \cdots & 0 \\
                C_y B_v & D_{yv} & \cdots & 0\\
                \vdots & \vdots & \ddots & \vdots\\
                C_y A^{H-1}B_v & C_y A^{H-2}B_v & \cdots & D_{yv}
            \end{bmatrix},
    \end{equation}
\end{subequations}
for \(v \in \{u,w\}\), with \(\mathcal{T}_v \in \mathbb{R}^{n_y (H+1) \times n_v (H+1)}\).
The optimization problem in~\eqref{eq:opti_problem} is convex, with closed-form solution:
\begin{equation}
    \mathcal{S}_{w_*}(n) = -\mathcal{G}^{-1}\mathcal{T}_w^\top\big(\mathcal{O}_x x_*(n) + \mathcal{T}_u \mathcal{S}_u(n) - \mathcal{S}_y(n)\big),
    \label{eq:opti_W}
\end{equation}
where \( \mathcal{G} = \mathcal{T}_w^\top \mathcal{T}_w + \lambda Q_w\). 
The first \(n_w\) elements of \(\mathcal{S}_{w_*}(n)\) are retained as \(w_*(n)\) and shift the window forward as
\begin{equation}
    x_*(n+1)= A x_*(n) + B_u u(n) + B_w w_*(n).
\end{equation}
In this recursion, \(x_*(0)\) is set to zero, thereby inducing a transient response that is unsuitable for parametric modeling of the nonlinear residual. Therefore, the first \(N_0 \in \mathbb{N}\) samples of each realization are later discarded, finally leading to
\begin{equation}
    \mathcal{D}_{*} = \big\{\big(w_*(n),\, x_*(n),\, {y}_*(n)\big)\big\}_{n=N_0}^{N-H-1},
    \label{eq:inferred_data2}
\end{equation}
where \({y}_*(n)\) is computed from~\eqref{eq:mod_y4} using the inferred \(x_*(n)\) and \(w_*(n)\), together with the current \(\theta_{wy}\) and the fixed \(\theta_{uy}\).

\subsubsection{Outer problem} \label{subsec:qop}
The dataset \(\mathcal{D}_{*}\) is obtained each time the inner problem is solved for a given \(\theta_{wy}\). The entries of \(\theta_{wy}\) are initialized from \(\mathcal{U}(-1,1)\) and updated as:
\begin{mini}|s|[1]
    { \theta_{wy}}
    {\frac{1}{N_{\text{tot}}} \sum_{n=N_0}^{N- H- 1}
       \big\| y(n) - {y}_*(n \mid \theta_{wy}) \big\|^2,\label{eq:outer_cost}}
    {}{}
\end{mini}
where \(N_{\text{tot}} = N-H-N_0\). 
\begin{remark}\label{rem:initsens}
    The outcome of the \ac{grs} is sensitive to the random initialization of \(\theta_{wy}\), as typical initializations render \(\mathcal{M}_1\) output controllable with respect to \(w\).
    This dependence introduces multiple effective minimizers of the outer loss~\eqref{eq:outer_cost}, not all yielding meaningful ({i.e., static}) relationships between \(x\) and \(w\). Especially for high-dimensional systems, multiple initializations should therefore be considered, with the neural network trained on each inferred dataset.
    \vspace*{-4pt}
\end{remark}
\begin{remark}
    {Stability of \(\mathcal{M}_1\) is inherited from Assumption~\ref{ass:init_model}.}
    \vspace*{-4pt}
\end{remark}

\subsection{Parametric learning of the nonlinear residual}\label{sec:neuralnet}
This second stage estimates the nonlinear residual parameters \(\theta_{\text{nl}}\) from \(\mathcal{D}_{*}\) obtained at convergence of the \ac{grs}.
The parameters are  obtained by solving
\begin{mini}|s|[1]
    { \theta_\text{nl}}
    {\frac{1}{N_{\text{tot}}} \sum_{n=N_0}^{N- H- 1}
       \big\| w_*(n) - \hat{w}_*(n \mid  \theta_\text{nl}) \big\|^2,\label{eq:nn_opti_cost}}
    {}{}
\end{mini}
where \(\hat{w}_*(n \mid  \theta_\text{nl})= f_{\text{NN}}\big(C_z x_*(n) + D_{zu} u(n)\big)\).
The matrices \(C_z\) and \(D_{zu}\) are initialized with elements drawn from \(\mathcal{U}(-1,1)\), while \(\theta_{\text{NN}}\) is initialized following~\cite{glorot2010understanding}.

\subsection{Final optimization using multiple shooting}\label{sec:final_opti}
In this final stage, we have (i) a fully initialized \ac{nllfr} model with parameters \(\theta =(\theta_{uy},\,\theta_{wy},\,\theta_{\text{nl}})\), and (ii) state estimates obtained from the \ac{grs}. This setup establishes a fully initialized multiple-shooting problem with both \(\theta\) and the shooting states as decision variables. Specifically, the starting indices of the shooting intervals are defined as
\begin{equation}
    \mathcal{I} \coloneqq \big\{N_0 + i d \mid i \in \{0, \dots, \lfloor (N_\text{tot}-1)/d \rfloor\}\big\},
\end{equation}
where \(d \geq 1\) denotes the interval length and \(\lfloor \cdot \rfloor\) is the floor function. 
The corresponding shooting states are defined as
\begin{equation}
    \mathcal{X} \coloneqq \big\{ x(n) \mid n \in \mathcal{I} \big\},
\end{equation}
with their initial values taken from \(\mathcal{D}_{*}\).
State propagation within each interval is enforced over the index set
\begin{equation}
    \mathcal{N}_i \coloneqq \{i, \, \dots, \, \min(i  +  d  -  1,\, N  -  H  -  1)\}, 
    \quad i \in \mathcal{I}.
\end{equation}
Finally, let \(\mathcal{I}^-\) denote the set of all but the last indices in \(\mathcal{I}\).
Then, the multiple-shooting problem becomes:
\begin{mini!}|s|[2]
    {\theta, \mathcal{X}}
    {\frac{1}{N_{\text{tot}}} \sum_{n=N_0}^{N- H- 1}
       \big\| y(n) - \hat{y}(n \mid \theta,\, \mathcal{X}) \big\|^2,}
    {\label{eq:opti_probledm_nonlin}}{\label{eq:nonlin_ids_cost}}
    \addConstraint{\hat{x}(n)}{= {x} (n),}
                  {n \in \mathcal{I} }
    \addConstraint{\hat{x}(n+1)}{= f \big(\hat{x}(n), \, u(n)\big),\,\,\,\,\,}
                  {n \in \mathcal{N}_i, \, i \in \mathcal{I} 
                  }
    \addConstraint{\hat{y}(n)}{ = h \big(\hat{x}(n), \, u(n)\big), }
                  {n \in  \mathcal{N}_i, \, i \in \mathcal{I} 
                  }
    \addConstraint{\hat{x}(n + d) }{ =  x(n + d), \label{eq:ms_cons}}
                  {n \in \mathcal{I}^-,}
\end{mini!}
with \(f(\cdot)\) and \(h(\cdot)\) denoting the unified state-update and output equations, respectively, according to~\eqref{eq:model4}. Continuity of the state trajectory across the intervals is ensured by~\eqref{eq:ms_cons}.

% ----------------------------------------------------------------------------------
\section{Experimental Results}\label{sec:results}
In this section, the proposed method is evaluated on two experimental benchmarks, followed by a discussion of computational aspects. Since both datasets contain periodic data, the respective initial linear models are parametrized using the \ac{bla}, following the procedure in~\cite{floren2025inference}; the subsequent steps do not exploit periodicity. All computations are performed in Python on a \SI{2.3}{\giga\hertz} Intel Core i7 with \SI{16}{\giga\byte} RAM. The \ac{grs} and neural network training are implemented in \texttt{JAX}~\cite{jax2018github}, using solvers from \texttt{Optax}~\cite{optax2020github} and \texttt{Optimistix}~\cite{optimistix2024}. The multiple-shooting optimization is implemented in \texttt{CasADi}~\cite{Andersson2019} using the IPOPT solver. {The code is available at \url{https://github.com/merijnfloren/guided-residual-search}.}

\subsection{Silverbox benchmark system}\label{sec:results_silverbox_mdof}
The Silverbox system~\cite{wigren2013three} represents an electronic implementation of a mass-spring-damper system with a cubic spring nonlinearity, with voltages representing force and displacement.
The input-output data exhibit an arrow-shaped structure consisting of two segments, sampled at approximately \SI{610}{\hertz}.
The first segment (“arrowhead”, used for testing) contains \num{40000} samples of white Gaussian noise with linearly varying amplitude;
the second segment consists of \num{86750} samples corresponding to ten random-phase multisine realizations, of which the final \num{21688} samples are reserved for testing.

\begin{figure}[!t]
    \centering
    \includegraphics[scale=1]{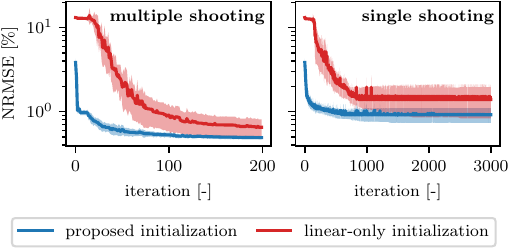}
    \caption{\ac{nllfr} simulation \acsp{nrmse} over the iterations of the final
    optimization stage {for Silverbox scenarios~\ref{scenario:S1}--\ref{scenario:S4}}. 
    }
    \label{fig:shooting_comparison}
    \vspace*{-18pt}
\end{figure}

\subsubsection{Algorithmic performance}
We first assess how \ac{grs}-based initialization improves the final optimization stage, and compare this effect for multiple-shooting and single-shooting strategies. We consider the following four scenarios.
\begin{Slist}
  \item \label{scenario:S1}\textit{Multiple shooting with \ac{grs}.}
  The procedure in Section~\ref{sec:method} is applied with \(n_x=2\) and \(n_w=n_z=1\). The \ac{grs} uses \(H=10\), \(N_0=100\), \(\lambda=1\), and \(\epsilon=10^{-8}\) for 25 Levenberg-Marquardt iterations. The single-layer neural network contains 15 ReLU neurons and is trained for 100 Adam iterations with learning rate \(1\times10^{-3}\). Multiple shooting uses \(d=1\) for 200 IPOPT iterations.
  \item \label{scenario:S2}\textit{Single shooting with \ac{grs}.}
  Identical initialization as in~\ref{scenario:S1}, followed by 3000 IPOPT iterations of single shooting\footnote{Multiple shooting typically converges in fewer iterations than single shooting due to reduced non-convexity, but requires more computational effort per iteration. We therefore compare IPOPT iteration counts corresponding to roughly the same runtime on our CPU (about \SI{20}{\minute}).}, implemented by setting \(d = N_{\text{tot}}\).
  \item \label{scenario:S3}\textit{Multiple shooting with linear-only initialization.} 
    The \ac{bla} model is normalized as in~\eqref{eq:normalize}, and \(\theta_{\text{nl}}\) is initialized according to Section~\ref{sec:neuralnet}, using the same architecture and initialization as in~\ref{scenario:S1}.
    Yet, the matrices \(B_w\) and \(D_{yw}\) are initialized from \(\mathcal{U}(-10^{-4}, 10^{-4})\), ensuring agreement with the \ac{bla} loss at the first iteration while avoiding zero gradients.
    The initial states for multiple shooting are obtained from simulated trajectories of the normalized \ac{bla}; all other settings follow~\ref{scenario:S1}.
   \item \label{scenario:S4}\textit{Single shooting with linear-only initialization.} Same initialization as in~\ref{scenario:S3}, followed by single-shooting optimization with the settings of~\ref{scenario:S2}.
\end{Slist}
We simulate each scenario 25 times, each with a different random initialization seed, and monitor the simulation performance of the \ac{nllfr} models across iterations of~\eqref{eq:opti_probledm_nonlin} using the \ac{nrmse}.
The results are visualized in Fig.~\ref{fig:shooting_comparison}, where the solid lines represent the median performance and the shaded areas indicate their respective \acp{mad}. Here, the \acp{nrmse} are computed over the full training set \(\mathcal{D}\) with the first 100 samples discarded to exclude transients.

As expected, the \ac{grs} provides a lower initial \ac{nrmse} at the start of the simulation-based optimization. The improved initialization also promotes consistent convergence to lower final \ac{nrmse} values, regardless of the shooting strategy. Moreover, the steep initial descent observed for scenarios~\ref{scenario:S1} and~\ref{scenario:S2}, in contrast to the limited progress in scenarios~\ref{scenario:S3} and~\ref{scenario:S4}, indicates that the former initializations place the parameters in a favorable region of attraction with informative gradients.
Finally, it is worth noting that the initialization overhead in~\ref{scenario:S1} and~\ref{scenario:S2} remains modest: the \ac{grs} and neural network training together required only \SI{21}{\second} on average.

Figure~\ref{fig:shooting_comparison} further demonstrates that multiple shooting outperforms single shooting, consistently converging to lower final \ac{nrmse} values regardless of the initialization approach.
This behavior likely results from the smoothing effect of multiple shooting on the loss landscape and its gradients~\cite{ribeiro2020smoothness}.

\begin{figure}[!t]
    \centering
    \includegraphics[scale=1]{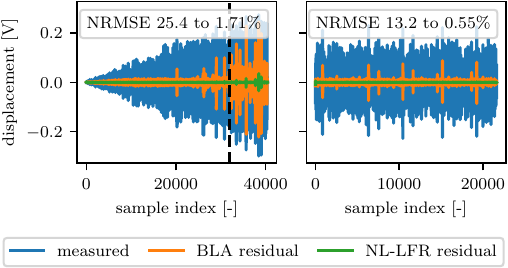}
    \caption{Best-performing \ac{nllfr} model from~\ref{scenario:S1} on the Silverbox arrowhead (left) and multisine (right) test data, compared to the \ac{bla} model. 
    }
    \label{fig:silverbox_test}
\end{figure}
\begin{table}
    \centering
    \caption{Simulation RMSEs on the Silverbox test data (mean \(\pm\) standard deviation in \SI{}{\milli\volt}) for the \ac{nllfr} models of~\ref{scenario:S1}, compared with state-of-the-art black-box methods.}
    {\setlength{\tabcolsep}{5pt} % default is 6pt
    {\fontsize{8}{8}\selectfont
    \begin{tabular}{l|cccc}
        & proposed & SUBNET~\cite{beintema2021nonlinear} & PNLSS~\cite{paduart2010identification}\\
    \midrule
    multisine          &  0.33 \(\pm\) 0.02 & 0.36  & -  \\
    arrowhead (full)       &  1.13 \(\pm\) 0.15 & 1.40  & 0.26 \\
    arrowhead (no extrap.) &  0.31 \(\pm\) 0.03 & 0.32  & - 
    \end{tabular}
    }}
    \label{tab:silverbox_results_grs}
    \vspace*{-10pt}
\end{table}
% \FloatBarrier

\subsubsection{Test data performance}
The best-performing model from scenario~\ref{scenario:S1} is evaluated on the test data, as illustrated in Fig.~\ref{fig:silverbox_test}. The resulting model accurately reproduces the system dynamics for both excitation signals and clearly outperforms the \ac{bla}.
For the arrowhead signal, most of the error arises from extrapolation beyond the training domain (black dashed line), which is expected given the black-box nature of neural networks and their limited extrapolation capability.

Table~\ref{tab:silverbox_results_grs} summarizes the \ac{rmse} test values of the models obtained from~\ref{scenario:S1}, alongside benchmark results from state-of-the-art black-box approaches: SUBNET~\cite{beintema2021nonlinear} and the \acs{pnlss} method~\cite{paduart2010identification}. The proposed method performs competitively, with SUBNET exhibiting similar extrapolation errors on the arrowhead signal. In contrast, \ac{pnlss} achieves superior extrapolation due to its degree-three polynomial basis, consistent with the cubic system nonlinearity.
In terms of computation, SUBNET reports a training time exceeding one day, while the proposed method requires only \SI{20}{\minute}; \acs{pnlss} does not report training times.

\subsection{F-16 ground vibration test}
\begin{figure}[!b]
    \vspace*{-8pt}
    \centering
    \includegraphics[scale=1]{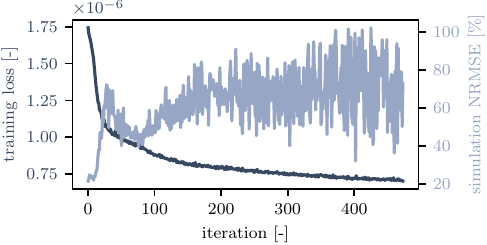}
    \caption{Distribution shift over the F-16 neural network training iterations. The
    static neural network loss~\eqref{eq:nn_opti_cost} steadily decreases, but the simulation error of the resulting \ac{nllfr} model quickly deteriorates and exhibits high volatility.}
    \label{fig:f16_dist_shift}
\end{figure}
We next consider a more complex F-16 ground vibration dataset~\cite{noel2017f}, comprising one input and three outputs. Excitation is applied using an electrodynamic shaker mounted beneath the right wing, with accelerometers measuring the response near the wing-tip payload interface and the excitation point. 
The estimation data consist of eight steady-state periods of a single random-phase multisine excitation spanning \SIrange{2}{15}{\hertz}, each containing \(8192\) samples and a \ac{rms} excitation amplitude of \SI{73.6}{\newton}. Prior to identification, the periods are averaged, and the data are low-pass filtered at \SI{15}{\hertz} to reduce the influence of measurement noise.
Based on empirical tuning, the model orders are selected as \(n_x=15\), \(n_w=2\), and \(n_z=6\). The \ac{grs} is performed with \(H=10\), \(N_0=500\), \(\lambda=100\), and \(\epsilon=10^{-8}\) for 50 Levenberg-Marquardt iterations. 
The neural network has one hidden layer with 32 \(\tanh\) units and is trained for up to 1000 Adam iterations with learning rate \(4\times10^{-3}\).

{Following Remark~\ref{rem:initsens}, we repeat the \ac{grs} and neural network training for 50 random initialization seeds. The combined runtime was measured as \(\SI{195}{\second} \pm \SI{1.71}{\second}\) (median \(\pm\) \ac{mad}), with approximately \SI{30}{\second} spent on the \ac{grs}, thereby indicating that the proposed initialization strategy remains lightweight enough to make a multi-start implementation practical.
}

To experimentally examine the distribution shift described in Theorem~\ref{th:distribution_shift}, we consider a typical neural network training run and track the deployment performance of the corresponding \ac{nllfr} model throughout the training iterations\footnote{
    Deployment performance is quantified by the simulation \ac{nrmse}, averaged over the three outputs, obtained if the full \ac{nllfr} model were to be simulated after each training iteration. 
}. The results, shown in Fig.~\ref{fig:f16_dist_shift}, indicate that although the training loss~\eqref{eq:nn_opti_cost} decreases monotonically, the deployment performance rapidly deteriorates and exhibits pronounced volatility. 
One plausible explanation, consistent with Theorem~\ref{th:distribution_shift}, is that while neural network training reduces the one-step-ahead error \(\varepsilon\), it may simultaneously increase the effective Lipschitz constant \(L\)~\cite{khromov2023some}.
The combined effect of these opposing trends may then substantially degrade simulation accuracy.

We initialize the final optimization stage with the \ac{nllfr} model obtained at the last iteration in Fig.~\ref{fig:f16_dist_shift} and compare the resulting multiple-shooting optimization over 1000 IPOPT iterations with \(d=200\) samples\footnote{Chosen as a compromise between memory usage and algorithm performance;  $d=1$ would offer best convergence but exceeds available memory.} against a linear-only initialization\footnote{
This essentially follows the approach of~\cite{schoukens2020initialization}, replacing single shooting by multiple shooting in the final optimization stage.
}. 
The resulting simulation \acp{nrmse}, averaged over the three outputs, are shown in Fig.~\ref{fig:benchmark_resultsf16_final_opti}. 
Despite starting from a lower simulation error, the linear-only initialization fails to escape a poor local minimum, whereas the proposed initialization improves rapidly and converges to a substantially better solution, indicating a more favorable initial parameter region.
{Each iteration in Fig.~\ref{fig:benchmark_resultsf16_final_opti} required approximately \SI{4.2}{\second}, resulting in a total runtime of about \SI{70}{\minute}.}

We finally assess the generalization performance of the optimized \ac{nllfr} model on an unseen test dataset consisting of eight steady-state periods of the same random-phase multisine realization, applied at a higher amplitude of \SI{85.7}{\newton} \ac{rms}. As with the training data, the periods are averaged and low-pass filtered at \SI{15}{\hertz}. 
Table~\ref{tab:f16_results_grs} reports the per-output simulation \acp{nrmse} compared to the \ac{bla}\footnote{Direct comparison with prior literature is limited by the sparse use of this benchmark and by differences in data processing across studies.}. 
The \ac{nllfr} model achieves a clear improvement, although the increased excitation amplitude, system complexity, and limited training data result in some residual simulation error. 
% \vspace*{-8pt}

\begin{figure}
    \centering
    \includegraphics[scale=1]{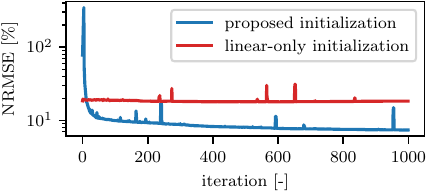}
    \caption{F-16 simulation \acp{nrmse} during the final optimization stage. Despite a higher initial error due to distribution shift (see Fig.~\ref{fig:f16_dist_shift}), the proposed method converges rapidly.
    }
    \label{fig:benchmark_resultsf16_final_opti}
\end{figure}

\begin{table}
    \centering
    \caption{Per-output simulation \acp{nrmse} of the \ac{bla} and \ac{nllfr} models evaluated on the F-16 test data.}
    {\setlength{\tabcolsep}{5pt} % default is 6pt
    {\fontsize{8}{8}\selectfont
    \begin{tabular}{l|cccc}
        & output 1 & output 2 & output 3 \\
    \midrule
    BLA        &  \SI{11.59}{\percent} & \SI{19.40}{\percent}  & \SI{20.59}{\percent}  \\
    NL-LFR      &  \SI{9.20}{\percent} & \SI{12.55}{\percent}  & \SI{13.27}{\percent} 
    \end{tabular}
    }}
    \label{tab:f16_results_grs}
    \vspace*{-10pt}
\end{table}

\subsection{Computational aspects}
{Both benchmark examples demonstrate the low computational cost of the proposed initialization stage. The dominant computational expense instead arises from the constrained multiple-shooting problem induced by the additional shooting-state decision variables\footnote{The present implementation did not exploit multiple-shooting parallelism.}. For the more complex F-16 example, this cost became a practical limitation: a compromise interval length of \(d=200\) samples was selected to keep the problem tractable. Moreover, this choice required empirical tuning, and reliable convergence was not obtained for all nearby hyperparameter settings or all random seeds.}

{
This computational bottleneck is particularly restrictive for high-dimensional systems with large effective Lipschitz constants, such as the F-16 benchmark, for which short shooting intervals are expected to improve robustness to hyperparameter choices and random-seed effects. Realizing this setting in practice will therefore require a more scalable multiple-shooting implementation that exploits the natural parallelism across intervals and improves memory management.}

% ----------------------------------------------------------------------------------
\section{Conclusions}\label{sec:conclusions}
{
This work introduced a computationally efficient \ac{grs}-based initialization scheme for simulation-based optimization of \ac{nllfr} state-space models. The adverse effect of distribution shift inherent to the proposed initialization was analyzed theoretically, illustrated experimentally, and mitigated through a multiple-shooting strategy. Experiments on two benchmarks showed improved convergence over naive initialization and competitive performance with state-of-the-art black-box methods. Future work will focus on more scalable multiple-shooting implementations.}

\bibliographystyle{IEEEtran}
\bibliography{IEEEabrv, references}

\end{document}